\def\BibTeX{{\rm B\kern-.05em{\sc i\kern-.025em b}\kern-.08em
    T\kern-.1667em\lower.7ex\hbox{E}\kern-.125emX}}
\DeclareMathOperator*{\argmin}{arg\,min}
\newtheorem{assumption}{Assumption}
\newtheorem{theorem}{Theorem}
\newtheorem{definition}{Definition}
\newtheorem{lemma}{Lemma}
\newtheorem{remark}{Remark}
\crefname{assumption}{Assumption}{Assumptions}
\crefname{definition}{Definition}{Definitions}
\begin{document}
\title{Learning-based Prescribed-Time Safety for Control of Unknown Systems with Control Barrier Functions}
\author{Tzu-Yuan Huang$^{1}$, Sihua Zhang$^{2}$, Xiaobing Dai$^{1}$, Alexandre Capone$^{1}$, Velimir Todorovski$^{1}$, \\ Stefan Sosnowski$^{1}$ and Sandra Hirche$^{1}$, \IEEEmembership{Fellow, IEEE}
\thanks{This work was supported by the European Union’s Horizon Europe innovation action programme under grant agreement No. 101093822, "SeaClear2.0", by the DAAD programme Konrad Zuse Schools of Excellence in Artificial Intelligence, and by the BMBF with “Souverän. Digital. Vernetzt.” under 6G-life: 16KISK002.}
\thanks{$^{1}$The authors are with the Chair of Information-oriented Control, TUM School of Computation, Information and Technology, Technical University of Munich, Germany. {\tt\small $\{$tzu-yuan.huang, xiaobing.dai, alexandre.capone, velimir.todorovski, sosnowski, hirche$\}$@tum.de}. }
\thanks{$^{2}$The author is with the School of Automation, Beijing Institute of Technology, Beijing, People’s Republic of China.
        {\tt\small sihua.zhang@bit.edu.cn}.}}
\pagestyle{empty} 
\maketitle
\thispagestyle{empty} 
\begin{abstract}
In many control system applications, state constraint satisfaction needs to be guaranteed within a prescribed time. While this issue has been partially addressed for systems with known dynamics, it remains largely unaddressed for systems with unknown dynamics. In this paper, we propose a Gaussian process-based time-varying control method that leverages backstepping and control barrier functions to achieve safety requirements within prescribed time windows for control affine systems. It can be used to keep a system within a safe region or to make it return to a safe region within a limited time window. These properties are cemented by rigorous theoretical results. The effectiveness of the proposed controller is demonstrated in a simulation of a robotic manipulator.
\end{abstract}

\begin{IEEEkeywords}
Machine learning, data-based control, uncertain systems, safety-critical control, robotics
\end{IEEEkeywords}

\section{INTRODUCTION}

\IEEEPARstart{C}{ontrol} systems with active constraints during limited time windows are ubiquitous. For example, in a robot-human handover scenario \cite{HandoverTask}, contact constraints are only relevant as long as the object is being handed over. This interaction temporarily modifies the safety region, reflecting the dynamic nature of the environment. The time-limited nature of the constraints in such settings allows for considerable flexibility, which can be leveraged to improve control performance. However, most existing algorithms aim to enforce safety at all times based on the initial safe condition, as opposed to relaxing these requirements when permissible, yielding potentially overly conservative behavior. Recent works have addressed a less stringent notion of safety than typically found in the literature, where safety constraints are only considered for a finite time window \cite{dai_adaptive_2022}.
The goal is to guarantee that the system returns and remains in the safe region within a pre-specified time interval. While this type of task has been addressed in settings with known dynamics, the considerably more challenging problem with unknown dynamics remains largely unaddressed.

Recently, several techniques addressing the issue of enforcing the system in the safe region have been proposed within the temporal constraints. Model predictive control \cite{2020_MPC2} is widely adopted in dynamic systems to achieve optimal performance while satisfying multiple constraints. However, this approach demands substantial computational resources since it requires solving a sequence of constrained optimization problems within a finite time horizon at each discrete time step. Safe reinforcement learning \cite{safe_RL1} is another powerful tool to address safety issues even under an uncertain environment. Nevertheless, its practical deployment is challenging due to the large sim-to-real gaps and theoretical guarantee of safety. Control barrier functions (CBFs) \cite{ames_control_2014} are increasingly utilized to ensure safety in systems, employing a quadratic program (QP) with linear constraints at every discrete time step. Benefiting by the QP framework, CBFs method could be applied as a real-time optimization-based controller, and the safety is guaranteed by rigorous proof. However, vanilla CBF lacks the consideration of temporal constraints. In settings where the system has to be within a safe region after a pre-specified time, the prescribed-time safety (PTSf) controller is devised in \cite{abel_prescribed-time_2023} based on a CBF framework. With a design of time-varying gain in the CBFs, the safety of the system is guaranteed in the specified time horizon starting from an initially safe condition. Compared to other CBF frameworks with time requirements such as finite-time \cite{li_formally_2018} and fixed-time safety controller \cite{garg_multi-rate_2022}, the PTSf controller benefits from its simple design and independence of the initial system states. However, the design of the PTSf controller crucially relies on the availability of accurate system dynamics, restricting its practical usage in cases with unknown dynamics or environmental uncertainties. In addition, the capability of returning to the safe regions from the unsafe initial condition for PTSf controllers is not shown in previous work.
\looseness=-1

Supervised machine learning techniques are increasingly promising for identifying unknown dynamical systems from data. However, adequately accounting for model uncertainty remains an open problem for safe control. In \cite{taylor_learning_nodate}, a neural network model is used to estimate model uncertainty, which is then leveraged together with robust  CBFs to guarantee the safety of the closed-loop system. The work of \cite{yaghoubi_training_2020} proposes a CBF-based imitation learning approach, where a deep neural network mimics the outcome of CBF-based controllers. In \cite{jagtap_control_2020}, 
a Gaussian process (GP) model is employed to formulate a robust CBF, which is then leveraged to derive a control law that renders the system safe. The works in \cite{lederer_safe_2023} and \cite{zhang2023safety} employ GP regression to model an elastic-joint robot, which is then rendered safe using a robust CBF. However, none of the techniques mentioned above address the initial unsafe condition that results from the temporary alteration of the safe region. Instead, these works always maintain the safety from a safe state at all times.

In this paper, we consider an unknown high-order control affine system with controllable canonical form, which requires to stay in a pre-defined safe set within a prescribed time. To this end, We propose a novel, robust Gaussian process-based framework for prescribed time control barrier functions incorporating the probabilistic uncertainty quantification. Our rigorous proof verifies that the system reliably remains in or returns to the safe set in a pre-defined time horizon with a high probability, regardless of whether the initial system state is safe or unsafe. The effectiveness of our method is demonstrated using a numerical simulation of a two-link robot manipulator.

The remainder of this paper is structured as follows. In \cref{section_problem_setting}, the system setting and the safety requirements are introduced. In \cref{section_GP_PT_CBF}, a time-varying Gaussian process-based control barrier function framework is proposed with mathematical proof. The effectiveness of the proposed method is shown via simulation for a two-link robot manipulator in \cref{section_simulation}, followed by the conclusion in \cref{section_conclusion}.
\section{Problem Setting and Preliminaries} \label{section_problem_setting}
\subsection{System Description}
Consider a nonlinear continuous-time system with unknown dynamics in the controllable canonical form as
\begin{equation}
\begin{aligned} \label{eq:1}
	&\dot{\bm{x}}_i  =\bm{x}_{i+1}, \qquad i=1,...,n-1, \\
	&\dot{\bm{x}}_n  = \bm{f}(\bm{x})+\bm{g}({\bm{x}})\bm{u} + \bm{d}(\bm{x}), 
\end{aligned}
\end{equation}
where $\bm{x}= [\bm{x}_1^\top, \cdots, \bm{x}_n^\top ]^\top \in \mathbb{X} \subset \mathbb{R}^{mn}$ with $\bm{x}_i = \left[ x_{i,1},\cdots,x_{i,m} \right]^\top \in \mathbb{R}^m$, $\forall i = 1,\cdots, n$ represents the system state in the compact domain $\mathbb{X}$ and $\bm{u} \in \mathbb{U} \subseteq \mathbb{R}^m$ denotes the control input at time $t \in \mathbb{R}_{+}$.
The functions $\bm{f}:\mathbb{X} \rightarrow  \mathbb{R}^{m}$ and $\bm{g}:\mathbb{X} \rightarrow  \mathbb{R}^{m\times m}$ are locally Lipschitz continuous functions that represent the known components of the system dynamics, the function $\bm{d}: \mathbb{X} \rightarrow  \mathbb{R}^{m}$ in \eqref{eq:1} encodes all state-dependent model uncertainties from, e.g., environmental effects and unmodeled parts of the system dynamics. This form is a common structure in many practical systems such as robot manipulators \cite{huang2023rapid}.
Moreover, we make the following assumption regarding $\bm{g}$.
\begin{assumption}\label{a_g_nonsigular} 
	For all $\bm{x} \!\in\! \mathbb{X}$, $\bm{g}({\bm{x}})$  is non-singular. 
\end{assumption}
\cref{a_g_nonsigular} is reasonable for various types of systems, e.g., manipulators, and is frequently satisfied by control-affine systems \cite{slotine1993robust}. It implies that we can generate control inputs to compensate for nonlinearities in an arbitrary direction.
\looseness=-1

In order to design a control law that ensures safety, we require an adequate model of the uncertainty $\bm{d}$. The data-driven model is employed to infer the uncertainty, where a noisy measurement data $\mathbb{D}$ specified by the following assumption is leveraged.

\begin{assumption} \label{a_DataSet}
	The data set $\mathbb{D}$ consists of $N \!\in\! \mathbb{N}$ training pairs $\{ \bm{x}^{(k)}, \bm{y}^{(k)} \}$ with $\bm{y}^{(k)} \!=\! \bm{d}(\bm{x}^{(k)}) \!+\! \bm{\epsilon}^{(k)}, \forall k \!=\! 1, \!\cdots\!, N$, where $\bm{\epsilon}^{(k)}$ is i.i.d. zero-mean Gaussian noise with covariance $\bm{\Sigma}_o \!=\! \mathrm{diag}(\sigma_{o,\!1}^2, \!\cdots\!,\sigma_{o,\!m}^2)$, $\sigma_{o,j} \!\in\! \mathbb{R}_{+}, \forall j \!=\! 1,\cdots,m$.
\end{assumption}

\cref{a_DataSet} is a mild assumption often encountered in learning-based control settings \cite{capone2019backstepping}. It allows for Gaussian distributed measurement noise, which can be due to, e.g., numerical differentiation. In some settings, the requirements for the measurement noise distribution can be relaxed, e.g., by restricting it to be bounded \cite{hashimoto2022learning}, but this is out of the scope of this paper.

\subsection{Prescribed-Time Safety (PTSf)}
In this paper, a safe set $\mathcal{C} \!\subseteq\! \mathbb{X}$ is defined by a known, continuously differentiable \textit{control barrier function} (CBF) $h(\bm{x}): \mathbb{X} \!\rightarrow\! \mathbb{R}$ as $\mathcal{C} \!=\! \{\bm{x}\in \mathbb{X}: h(\bm{x})\geq 0\}$. Specifically, the system is considered safe if $\bm{x} \!\in\! \mathcal{C}$, and unsafe otherwise.

If the system \eqref{eq:1} is known perfectly, then the CBF $h$ can be leveraged to compute certifiably safe control inputs
\cite{ames_control_2014,Aron_2017}. 
For many practical systems, it is possible to derive an appropriate CBF with only imperfect knowledge of the system at hand, e.g., adaptive cruise control system \cite{castaneda2022probabilistic}.

We now introduce the notion of \textit{prescribed-time} safety, which is the main focus of this paper. Based on~\cite{abel_prescribed-time_2023}, we distinguish between PTSf for systems that are initially safe and unsafe. The latter case corresponds to rescuing safety within a prescribed time~\cite{abel_prescribed-time_2023}.

\begin{definition}[PTSf for initially safe system]\label{d1}
Consider the system \eqref{a_g_nonsigular}. If the initial state is safe, i.e., $\bm{x}(t_0)\in \mathcal{C}$, then the system is said to be PTSf with a prescribed time $T_{\text{pre}} \in \mathbb{R}_{+}$ if $h(\bm{x}(t)) \!\geq\! 0, \forall \ t \!\in\! [t_0, t_0 \!+\! T_{\text{pre}})$. \looseness=-1
    \end{definition}

\begin{definition}[PTSf for initially unsafe system]\label{d2}
Consider the system \eqref{a_g_nonsigular}. If the initial state is unsafe, i.e., $\bm{x}(t_0)\notin \mathcal{C}$, then the system is said to be PTSf with a prescribed time $T_{\text{pre}} \in \mathbb{R}_{+}$ if $h(\bm{x}(t_0+T_{\text{pre}})) \geq 0.$
\end{definition}

Our goal is then to design a control algorithm that guarantees PTSf for the system \eqref{eq:1} whenever the initial state is either safe or unsafe.
\section{Learning-based Control with Prescribed-time Safety}\label{section_GP_PT_CBF}
To address the PTSf problem, Gaussian process regression is adopted as a data-driven approach to approximate the uncertainty $\bm{d}(\bm{x})$ in \eqref{eq:1}. Based on GP regression, we propose a learning-based PTSf controller to guarantee the safety objectives defined in \cref{d1} and \cref{d2} with system uncertainty.

\subsection{Gaussian Process Regression}
Gaussian process regression, as a non-parametric method, is widely used to approximate unknown continuous functions due to its modeling flexibility.
In order to learn the $m$-dimensional unknown function $\bm{d}(\cdot) = [d_1(\cdot), \cdots, d_m(\cdot)]^\top$ from data set $\mathbb{D}$ satisfying \cref{a_DataSet}, each component $d_j$ is represented as a GP $d_j \sim \mathcal{GP}(m_j(\cdot),k_j(\cdot,\cdot)), \forall j = 1,\cdots,m$, which is specified by the prior mean $m_j(\cdot):\mathbb{X} \rightarrow \mathbb{R}$ and Lipschitz covariance function $k_j(\cdot,\cdot):\mathbb{X} \times \mathbb{X} \rightarrow \mathbb{R}_{0,+}$. The mean function $m_j(\cdot)$ encodes the prior knowledge of the system, which in our case is included in $f(\cdot)$ resulting in $m_j(\bm{x}) = 0$ for $\forall \bm{x} \in \mathbb{X}$.
The covariance function, also called kernel function, resulting in $k_j(\bm{x}, \bm{x}')$ reflects the correlation between evaluations of $d_j(\cdot)$ at state $\bm{x}$ and $\bm{x}'$ with $\bm{x}, \bm{x}' \in \mathbb{X}$. The unknown function $\bm{d}(\bm{x})$ now is expressed as $m$ GP models
	\begin{equation}
		\begin{aligned}
			\bm{d}(\bm{x}) = \left\{ \begin{array}{c}
				d_1 \sim \mathcal{GP}(0,\kappa_1(\cdot,\cdot)) \\
				\vdots \\[5pt] 
				d_m \sim \mathcal{GP}(0,\kappa_m(\cdot,\cdot))\\
			\end{array}\right.
		\end{aligned}
	\end{equation}

Utilizing the data set $\mathbb{D}$ satisfying \cref{a_DataSet} with $| \mathbb{D} | = N$ and Bayesian principle, the value of $\bm{d}(\bm{x})$ follows a Gaussian distribution characterized by the posterior mean $\boldsymbol{\mu}(\bm{x})=[ \mu_1(\bm{x}), \cdots, \mu_m(\bm{x}) ]^{\top}$ and variance $\bm{\Sigma}(\bm{x})=\mathrm{diag}\left( \sigma^2_1(\bm{x}), \cdots, \sigma^2_m(\bm{x}) \right)$ with
\begin{align}
	&\mu_j(\bm{x}) = \bm{k}_{\bm{X}_j}^{\top}(\bm{x})(\bm{K}_j+\sigma_{n}^2\bm{I})^{-1} \bm{y}_j, \label{eq:GP_Mean}
	\\
	&\sigma_j^2(\bm{x})=\kappa_j(\bm{x},\bm{x})-\bm{k}_{\bm{X}_j}^{\top}(\bm{x})(\bm{K}_j+\sigma_{n}^2\bm{I})^{-1} \bm{k}_{\bm{X}_j}(\bm{x}), \label{eq:GP_PosteriorVariance}
\end{align}
where the kernel vector and gram matrix are specified as $\bm{k}_{\bm{X}_j}(\bm{x})=[\kappa_j(\bm{x}^{(1)},\bm{x}),\cdots,\kappa_j(\bm{x}^{(N)},\bm{x})]^{\top}$ and $\bm{K}_j=[\kappa_j(\bm{x}^{(i)},\bm{x}^{(k)})]_{i,k=1,\cdots,N}$ , respectively. The vector $\bm{y}_j=[y_j^{(1)},\cdots, y_j^{(N)}]^{\top}$ with $y_j^{(k)}$ representing the $j^{th}$ dimension of $\bm{y}^{(k)}$ is the concatenation of the output values in data set $\mathbb{D}$.
The posterior mean function $\bm{\mu}(\cdot)$ serves as a prediction model of the unknown function $\bm{d}(\cdot)$, whereas the variance $\bm{\Sigma}(\cdot)$ is employed as an indicator of epistemic uncertainty, which is shown as follows.

\begin{lemma}[\cite{lederer_uniform_nodate}]\label{l1}
	Consider an unknown function $d_j(\cdot)$ for $\forall j = 1, \cdots, m$ and a data set satisfying \cref{a_DataSet}.
	Choose $\tau \in \mathbb{R}_+$ and $\delta \in (0,1) \subset \mathbb{R}$, then
	\begin{equation}
		\begin{aligned}
			&\Pr\left\{|d_j(\bm{x})-\mu_j(\bm{x})|\leq \eta_j(\bm{x}), \forall \bm{x} \in \mathbb{X}\right\} \geq 1-\delta,
			\\ 
			&\eta_j(\bm{x}) = \sqrt{\beta_{\delta}(\tau)}\sigma_j(\bm{x}) + \gamma_{\delta}(\tau), \label{eq:eta}
		\end{aligned}
	\end{equation}
	where $\gamma_{\delta}(\tau)=(L_{d,j}+\sqrt{\beta_{\delta}(\tau)}L_{\sigma,j}+L_{\mu,j})\tau$ and
	\begin{align}
		& \beta_{\delta}(\tau) = 2\sum\nolimits_{j=1}^{mn}\log \Big( \dfrac{0.5\sqrt{mn}(\bar{x}_j-\underline{x}_j)}{\tau\delta} +\dfrac{1}{\delta}\Big),
	\end{align}
	and $\bar{x}_j = \max_{\bm{x} \in \mathbb{X}}x_j,\underline{x}_j=\min_{\bm{x} \in \mathbb{X}}x_j$ with $x_j$ refering to the $j$-th dimension of $\bm{x}$.
	The constants $L_{\mu,j}$, $L_{\sigma,j}$, $L_{d,j} \in \mathbb{R}_+$ are the Lipschitz constants of mean $\mu_j(\cdot)$, standard deviation $\sigma_j(\cdot)$ and function $d_j(\cdot)$, respectively.
\end{lemma}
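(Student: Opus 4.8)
The plan is to follow the standard covering-argument strategy for uniform GP error bounds: combine a pointwise Gaussian concentration inequality at finitely many grid points with a Lipschitz-continuity continuation to the whole domain. Because each $d_j$ is modeled as a sample from $\mathcal{GP}(0,\kappa_j)$ and the data satisfy \cref{a_DataSet}, the posterior of $d_j(\bm{x})$ at any fixed $\bm{x}$ is Gaussian with mean $\mu_j(\bm{x})$ and variance $\sigma_j^2(\bm{x})$ from \eqref{eq:GP_Mean}--\eqref{eq:GP_PosteriorVariance}. The pointwise step is then immediate: $(d_j(\bm{x})-\mu_j(\bm{x}))/\sigma_j(\bm{x})$ is standard normal, so a Gaussian tail estimate gives $\Pr\{|d_j(\bm{x})-\mu_j(\bm{x})|>\sqrt{\beta}\,\sigma_j(\bm{x})\}\leq 2e^{-\beta/2}$ for any $\beta>0$.

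Next I would discretize $\mathbb{X}$. Since $\mathbb{X}$ is compact and contained in the box $\prod_{j}[\underline{x}_j,\bar{x}_j]$, I construct a finite grid $\mathbb{X}_\tau$ with spacing chosen so that every $\bm{x}\in\mathbb{X}$ lies within Euclidean distance $\tau$ of some grid point; matching the per-coordinate spacing to a covering radius $\tau$ introduces the factor $\sqrt{mn}$ and yields a cardinality bounded by $\prod_{j}\tfrac{0.5\sqrt{mn}(\bar{x}_j-\underline{x}_j)}{\tau}$. Applying the pointwise tail bound at each grid point and taking a union bound over $\mathbb{X}_\tau$, I choose $\beta=\beta_\delta(\tau)$ so that the aggregate failure probability is at most $\delta$; solving the resulting inequality for $\beta$ reproduces the logarithmic expression in the statement.

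Finally I would transfer the grid bound to arbitrary $\bm{x}\in\mathbb{X}$ via the triangle inequality. Writing $\bm{x}'$ for the nearest grid point, so that $\|\bm{x}-\bm{x}'\|\leq\tau$, I split
\begin{align*}
|d_j(\bm{x})-\mu_j(\bm{x})| \leq{}& |d_j(\bm{x})-d_j(\bm{x}')| \\
&+ |d_j(\bm{x}')-\mu_j(\bm{x}')| + |\mu_j(\bm{x}')-\mu_j(\bm{x})|,
\end{align*}
bounding the first and third terms by $L_{d,j}\tau$ and $L_{\mu,j}\tau$ through Lipschitz continuity and the middle term by the grid bound $\sqrt{\beta}\,\sigma_j(\bm{x}')$. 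Using $\sigma_j(\bm{x}')\leq\sigma_j(\bm{x})+L_{\sigma,j}\tau$ to re-center the variance then collects the three $\tau$-proportional contributions into $\gamma_\delta(\tau)$, giving exactly $\eta_j(\bm{x})=\sqrt{\beta_\delta(\tau)}\,\sigma_j(\bm{x})+\gamma_\delta(\tau)$.

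I expect the second step to be the main obstacle: making the counting of grid points and the union bound yield the precise constant $\beta_\delta(\tau)$, including the $\sqrt{mn}$ scaling and the additive $1/\delta$ inside each logarithm. This hinges on carefully relating the infinity-norm grid spacing to the Euclidean covering radius and on how the total budget $\delta$ is allocated across $\mathbb{X}_\tau$, whereas the pointwise and Lipschitz steps are routine.
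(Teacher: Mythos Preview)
The paper does not prove this lemma; it is quoted verbatim from \cite{lederer_uniform_nodate} and only followed by a short discussion pointing back to that reference for the Lipschitz constants. So there is no in-paper proof to compare against.

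That said, your proposal is exactly the argument used in the cited source: pointwise Gaussian tail bound, union bound over a $\tau$-grid of the compact box $\prod_j[\underline{x}_j,\bar{x}_j]$, and a Lipschitz continuation via the triangle inequality on $d_j$, $\mu_j$, and $\sigma_j$. Your identification of the delicate step is also accurate: the specific form of $\beta_\delta(\tau)$, in particular the $\sqrt{mn}$ factor and the additive $1/\delta$ inside each logarithm, comes from counting grid points per coordinate as $\big\lceil 0.5\sqrt{mn}(\bar{x}_j-\underline{x}_j)/\tau\big\rceil$ (the ceiling produces the ``$+1$'' that becomes $1/\delta$ after dividing the failure budget) and from converting an $\ell_\infty$ grid spacing to a Euclidean covering radius in dimension $mn$. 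Once you track those bookkeeping details, your three steps assemble into the stated $\eta_j(\bm{x})$ without further ideas.
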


\cref{l1} provides a probabilistic bound for the prediction error from GP regression, which is widely used in safety-critical applications. The detailed expressions of $L_{\mu,j}$ and $L_{\sigma,j}$ can be found in \cite{lederer_uniform_nodate}, and the Lipschitz constant $L_{d,j}$ for the unknown function $d_j(\cdot)$ can be approximated as in \cite{lederer_uniform_nodate}. \looseness=-1

\subsection{PTSf Design with Learning Uncertainty}
Recalling the system as described in (\ref{eq:1}), the aim of this subsection is to develop a safety controller for the control input, ensuring that (\ref{eq:1}) satisfies \cref{d1} and \ref{d2} with a $n^{th}\!$ order differentiable CBF $\bm{h}(\bm{x}_1)\!:\! \mathbb{R}^m \!\to\! \mathbb{R}^{d_h}$ with $d_h \!\in\! \mathbb{N}_+$. Without loss of generality, we consider the scalar CBF with $d_h \!=\! 1$, whose result can be extended to high dimensional $\bm{h}(\cdot)$.

The design of the prescribed-time safety controller incorporates a blow-up function, which is defined as follows:
\begin{align} \label{eq:blowupfunction}
	\varphi(t)=\dfrac{T_{\text{pre}}^2+\alpha((t-t_0)^2-(t-t_0)T_{\text{pre}})^2}{(T_{\text{pre}}+t_0-t)^2}, && t \geq t_0,  
\end{align}
where $\alpha \in \mathbb{R}_{0,+}$ represents a scalar tunable parameter to the convergence speed.
Note that $\varphi(\cdot)$ is an increasing positive function in the time horizon $[t_0,t_0 \!+\! T_{\text{pre}})$, which provides the flexibility to achieve PTSf in our following design.
With the definition of the time-varying function $\varphi(\cdot)$ in \eqref{eq:blowupfunction}, a series of barrier functions are designed as
\begin{align}
	&h_1(\bm{x}_{1}) = h(\bm{x}_1), \label{eq:h1}\\
	&h_{i+1}(t, \bm{x}_{1:i+1}) = \dot{h}_i(t, \bm{x}_{1:i}) + c_i \varphi(t) h_i(t, \bm{x}_{1:i}), \label{eq:hi+1}	
\end{align}
for $i=1,\cdots,n - 1$, where $\bm{x}_{1:i} = [\bm{x}_1^\top, \cdots, \bm{x}_i^\top]^\top$ and $c_i \in \mathbb{R}_+$ are positive constants to be determined later.
The time derivatives $\dot{h}_i$ for $i = 1, \cdots, n$ are explicitly written according to \eqref{eq:1} as
\begin{align}
    &\dot{h}_i(t, \bm{x}_{1:i}) = \sum\nolimits_{j=1}^{i} \frac{\partial h_i(t, \bm{x}_{1:i})}{\partial \bm{x}_j} \bm{x}_{j+1} + \frac{\partial h_i(t, \bm{x}_{1:i})}{\partial t} \label{eq:hi_dot}
\end{align}
for $i=1,\cdots,n-1$ and
\begin{align} \label{eq:h_n+1}
    h_{n+1}&(t, \bm{x}, \bm{u}) = \dot{h}_n(t, \bm{x}) + c_n \varphi(t) h_n(t, \bm{x}) \\
    =& \sum\nolimits_{j=1}^{n-1}\frac{\partial h_n(t, \bm{x})}{\partial \bm{x}_j} \bm{x}_{j+1} \!+\! \frac{\partial h_n}{\partial \bm{x}_n}( \bm{f}({\bm{x}}) \!+\! \bm{g}({\bm{x}})\bm{u} \!+\! \bm{d}(\bm{x}) )  \nonumber \\
    & + c_n \varphi(t) h_n(t, \bm{x}) \nonumber
\end{align}

Note that, the control input $\bm{u}$ and uncertainty of the system $\bm{d}(\cdot)$ are included in \eqref{eq:h_n+1}, making $h_{n+1}(t, \bm{x}, \bm{u})$ impossible to evaluate.
Instead, the posterior mean $\bm{\mu}(\cdot)$ and variance $\bm{\Sigma}(\cdot)$ obtained from Gaussian process in \eqref{eq:GP_Mean} and \eqref{eq:GP_PosteriorVariance} are employed to approximate $h_{n+1}(\bm{x}_{n+1},t)$.
For notational simplicity, we denote $h_i(t) \coloneqq h_i(t, \bm{x}_{1:i}(t))$ and $\dot{h}_i(t) \coloneqq \dot{h}_i(t, \bm{x}_{1:i}(t))$ and the control performance is shown in the following theorem.

\begin{theorem}\label{t1}
	Consider the system \eqref{eq:1} and let \cref{a_g_nonsigular} and \ref{a_DataSet} hold. 
    Let $\bm{\mu}(\cdot)$ and $\eta_j(\cdot), j = 1,\cdots, m$ be as in \eqref{eq:GP_Mean} and (\ref{eq:eta}) respectively and choose $\delta \in (0,1/m)$. 
    Let $\bm{u}_{\text{nom}}$ be the control input provided by other nominal controllers, e.g., PID controller, feedback linearization, and $\bm{u}_{\text{safe}}$ be obtained by solving the quadratic programming (QP) as
	\begin{subequations}\label{eq:main_theory}
		\begin{align}
			& \bm{u}_{\text{safe}} = \argmin\nolimits_{\bm{u} \in \mathbb{U}}  ||\bm{u} - \bm{u}_{nom}||^2 \\
			& \mathrm{s.t.} \quad h_{n+1}^*(t, \bm{x},\bm{u}) \geq 0,
		\end{align}
	\end{subequations}
    in which 
    \begin{align}\label{eq:theorem_term}
         &h_{n+1}^*(t, \!\bm{x},\!\bm{u}) \!=\! \sum\nolimits_{j\!=\!1}^{n\!-\!1}\dfrac{\partial h_n}{\partial \bm{x}_j}\bm{x}_{j\!+\!1} \!+\! \frac{\partial h_n}{\partial t} \!+\! c_n \varphi(t) h_n(t) \\
		&+ \dfrac{\partial h_n}{\partial \bm{x}_n}(\bm{f}(\bm{x}) + \bm{g}({\bm{x}})\bm{u} + \boldsymbol{\mu}(\bm{x}) )  - \sum\nolimits_{j=1}^m \left|\dfrac{\partial h_n }{\partial x_{n,j}}\right| \left|\eta_j(\bm{x})\right| \nonumber
    \end{align}
	with initial gains $c_i$ for $i = 1, \cdots, n$ satisfying $c_n > 0$ and
	\begin{equation}
		\begin{split}
			& c_i > \max \{0,- h_i^{-1}(t_0) \dot{h}_i(t_0) \}, i=1,\cdots,n-1,
		\end{split}
	\end{equation}
	If the QP \eqref{eq:main_theory} is feasible for all $\bm{x}\in \mathbb{X}$ and all $t \geq t_0$, then the control input $\bm{u} = \bm{u}_{\text{safe}}$ in \eqref{eq:1}
	guarantees PTSf according to \cref{d1} and \ref{d2} with probability of at least
	$1-m\delta$.
\end{theorem}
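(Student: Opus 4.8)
The plan is to first collapse the probabilistic statement to a deterministic one, and then run a backstepping argument driven by the blow-up function $\varphi$. For the probabilistic reduction, I would invoke \cref{l1} separately for each component $d_j$, $j=1,\dots,m$, each on its own event of probability at least $1-\delta$, and take a union bound so that $|d_j(\bm{x})-\mu_j(\bm{x})|\le\eta_j(\bm{x})$ holds simultaneously for all $j$ and all $\bm{x}\in\mathbb{X}$ on an event of probability at least $1-m\delta$ (this is exactly why the hypothesis requires $\delta\in(0,1/m)$). On that event, comparing \eqref{eq:h_n+1} with \eqref{eq:theorem_term} and bounding $\sum_j\frac{\partial h_n}{\partial x_{n,j}}(d_j-\mu_j)\ge-\sum_j|\frac{\partial h_n}{\partial x_{n,j}}|\,\eta_j$ gives $h_{n+1}(t)\ge h_{n+1}^*(t)$. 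Since feasibility of the QP \eqref{eq:main_theory} enforces $h_{n+1}^*(t)\ge0$ along the closed loop --- here \cref{a_g_nonsigular} guarantees that the $\bm{u}$-coefficient $\frac{\partial h_n}{\partial\bm{x}_n}\bm{g}(\bm{x})$ renders the constraint a genuine affine half-space, so the program is well posed --- we obtain $h_{n+1}(t)\ge0$ for all $t\in[t_0,t_0+T_{\text{pre}})$ with probability at least $1-m\delta$. The remainder of the argument is deterministic on this event.

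The core tool is an integrating-factor reading of the recursion \eqref{eq:hi+1}. Writing $\Phi_i(t)=\exp\!\big(c_i\int_{t_0}^t\varphi(s)\,\mathrm{d}s\big)$, the identity $h_{i+1}=\dot h_i+c_i\varphi\,h_i$ is equivalent to
\begin{equation}
\frac{\mathrm{d}}{\mathrm{d}t}\big(h_i(t)\,\Phi_i(t)\big)=h_{i+1}(t)\,\Phi_i(t),
\end{equation}
so $h_i\Phi_i$ is nondecreasing on any interval where $h_{i+1}\ge0$. First I would record the three properties of the blow-up function that drive everything: $\varphi(t_0)=1$, $\varphi$ is positive and increasing on $[t_0,t_0+T_{\text{pre}})$, and, since $\varphi(t)\sim T_{\text{pre}}^2/(t_0+T_{\text{pre}}-t)^2$ near the endpoint, $\int_{t_0}^t\varphi\to\infty$, hence $\Phi_i(t)\to\infty$, as $t\to(t_0+T_{\text{pre}})^-$.

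For the initially safe case (\cref{d1}) I would argue in two passes. A forward induction on the initial values, using $\varphi(t_0)=1$, turns the gain conditions into sign conditions: from $h_1(t_0)=h(\bm{x}(t_0))>0$ and $c_i>-h_i^{-1}(t_0)\dot h_i(t_0)$ one gets $h_{i+1}(t_0)=\dot h_i(t_0)+c_i h_i(t_0)>0$, so $h_i(t_0)>0$ for every $i$. A backward induction then propagates positivity in time: with $h_{n+1}\ge0$ on $[t_0,t_0+T_{\text{pre}})$, monotonicity of $h_n\Phi_n$ gives $h_n(t)\ge h_n(t_0)/\Phi_n(t)>0$; this supplies $h_n\ge0$ as the forcing term one level down, and repeating through $i=n-1,\dots,1$ yields $h(\bm{x}(t))=h_1(t)>0$ for all $t\in[t_0,t_0+T_{\text{pre}})$, which is \cref{d1}.

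The delicate case is the initially unsafe one (\cref{d2}), and I expect the terminal limit there to be the main obstacle. Now only $h_1(t_0+T_{\text{pre}})\ge0$ is required, but the clean cascade above breaks down because the intermediate barriers $h_i$ need no longer keep a definite sign during the transient, so one cannot assert $h_{i+1}\ge0$ uniformly in order to make $h_i\Phi_i$ monotone at each level. The idea I would pursue is to use the variation-of-constants form $h_i(t)=\Phi_i(t)^{-1}\big(h_i(t_0)+\int_{t_0}^t h_{i+1}(s)\Phi_i(s)\,\mathrm{d}s\big)$ starting from the only guaranteed sign, $h_{n+1}\ge0$, and to show that as $t\to(t_0+T_{\text{pre}})^-$ the blow-up $\Phi_i(t)\to\infty$ washes out the possibly negative contribution of the finite initial data, forcing $\liminf_{t\to(t_0+T_{\text{pre}})^-}h_i(t)\ge0$ level by level; by continuity of $t\mapsto h(\bm{x}(t))$ this gives $h(\bm{x}(t_0+T_{\text{pre}}))\ge0$. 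Making this washing-out rigorous --- controlling how a sign-indefinite, blow-up-weighted integral behaves in the limit, uniformly across the $n$ nested levels and independently of the gain ordering $c_1,\dots,c_n$ --- is the step I would budget the most care for.
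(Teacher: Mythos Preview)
Your proposal follows essentially the same route as the paper: a union bound over the $m$ coordinatewise GP error bounds from \cref{l1} to reduce to a deterministic $h_{n+1}\ge0$, then the integrating-factor/variation-of-constants cascade on \eqref{eq:hi+1}, with the initially safe case handled by forward induction on the signs of $h_i(t_0)$ via the gain conditions (using $\varphi(t_0)=1$) followed by backward propagation of positivity. For the initially unsafe case --- the step you rightly flag as most delicate --- the paper's specific device is to replace each gain by an auxiliary $c_j^*\in\{\underline c,\bar c\}$, with $\underline c=\min_i c_i$ and $\bar c=\max_i c_i$, chosen according to the sign of $h_j(t_0)$, which collapses the nested integrals into a closed-form lower bound $h_1(t)\ge\sum_{j=1}^n h_j(t_0)\,\frac{(t-t_0)^{j-1}}{(j-1)!}\,e^{-c_j^*\int_{t_0}^t\varphi}$ where every summand is driven to $0$ as $t\to t_0+T_{\text{pre}}$ by the divergence of $\int\varphi$; this is precisely the ``washing-out'' you anticipate but do not yet implement.
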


\begin{proof}
	Our proof is structured into two parts. In the first part, we will show that $h_{n+1}(t) \ge 0, \forall t \ge t_0$ is satisfied with a high probability. Consider the subtraction of $h_{n+1}^*(t)$ in \eqref{eq:theorem_term} from $h_{n+1}(t)$ in \eqref{eq:h_n+1}, which is written as
	\begin{align} \label{eq:psi}
			h_{n+1}(t) -& h_{n+1}^*(t) \\
			&=  \dfrac{\partial h_n}{\partial \bm{x}_n}(\bm{d}(\bm{x})-\bm{\mu}(\bm{x})) + \sum\nolimits_{j=1}^m \left|\dfrac{\partial h_n}{\partial x_{n,j}}\right| \left|\eta_j(\bm{x})\right|. \nonumber
	\end{align}
	Note that \eqref{eq:psi} is only related to $\bm{x}$, such that 
    by applying the uniform probabilistic error bound in \cref{l1} to \eqref{eq:psi} and the extension to $m$ dimension through Boole's inequality \cite{venkatesh2013theory, Union_bound}, the positivity of $h_{n+1}(t) - h_{n+1}^*(t)$ is guaranteed within a probability bound as 
	\begin{gather}
		\Pr\left\{ h_{n+1}(t) - h_{n+1}^*(t) \ge 0, \forall \bm{x} \in \mathbb{X} \right\} \geq 1-m\delta. \label{eq:Pr_h_minus_h*}
	\end{gather}
    By picking the values of $\delta \leq 1/m \in \mathbb{R}^+$, a high probability of \eqref{eq:Pr_h_minus_h*} and the following equation can be guaranteed.  
    
    \eqref{eq:Pr_h_minus_h*} implies that $h_{n+1}(t) \geq h_{n+1}^*(t)$ with a probability of at least $1-m\delta$ for $\forall t \geq t_0$ with $\bm{x}(t) \in \mathbb{X}$. 
    Considering that in the compact domain $\mathbb{X}$, $h_{n+1}^*(t) \geq 0$ for $\forall t \geq t_0$ is guaranteed in \eqref{eq:main_theory}, this consequently proves that
	\begin{equation}
        \begin{aligned}
            &\Pr\{h_{n+1}(t) \geq 0, \forall t \geq t_0\} \geq 1 - m \delta.  \label{eq:h_n+1>0withProb}\\
            &h_{n+1}\coloneqq h_{n+1}(t, \bm{x}_{n}(t)), \forall \bm{x}_{n}(t) \in \mathbb{X} 
        \end{aligned}
	\end{equation}
    A similar procedure is also employed in \cite[Th4.3]{jagtap_control_2020}. In the second part of our proof, we aim to guarantee the prescribed-time safety, regardless of whether the initial system condition is safe or unsafe. Based on \eqref{eq:h_n+1>0withProb}, it implies from \eqref{eq:h_n+1} that
	\begin{gather}
		\Pr\{\dot{h}_n(t) \geq -c_n \varphi(t)  h_n(t), \forall t \geq t_0 \}\geq 1-m \delta. \label{eq:dh_ndt>cmuhwithProb}
	\end{gather}
	By applying the variation of constants formula and the comparison lemma \cite{hassan2002nonlinear}, for the time horizon of $[t_0, t_0 + T_{\text{pre}})$, the solution of \eqref{eq:dh_ndt>cmuhwithProb} is derived as
	\begin{gather}
		\Pr\Big\{ h_n(t) \!\geq\! h_n(t_0) e^{-c_n\int\nolimits_{t_0}^t \varphi (s)ds}, \forall t \!\ge\! t_0 \Big\} \!\geq\! 1 \!-\! m\delta.  \label{eq:hn>inthn}
	\end{gather}
	Similarly, the analytical solutions of $h_i(t)$ for $i=1,\cdots,n-1$ in \eqref{eq:hi+1} are also reformulated as
	\begin{equation} \label{eq:hi_int}
		\begin{aligned} 
			h_{i}(t)&=\int\nolimits_{t_0}^te^{-c_{i}\int\nolimits_{\tau}^t\varphi (s)ds}h_{i+1}(\tau)d\tau \!+\!  h_{i}(t_0)e^{-c_{i}\int\nolimits_{t_0}^t\varphi (s)ds}.
		\end{aligned}
	\end{equation}
	Moreover, by substituting \eqref{eq:hn>inthn} into \eqref{eq:hi_int}, it leads to
	\begin{align} \label{eq:h_twoexp}
        h_{i}(t) \geq& h_{i+1}(t_0)\int\nolimits_{t_0}^t e^{-\left[c_{i}\int\nolimits_{\tau}^t\varphi (s)ds + c_{i+1}\int\nolimits_{t_0}^{\tau}\varphi (s)ds\right]} d\tau \\
        &+ h_{i}(t_0) e^{-c_{i}\int\nolimits_{t_0}^t\varphi (s)ds} \nonumber
	\end{align}
    with probability of at least $1 - m\delta$ for $i=n-1$.
    
	We now start to demonstrate that the prescribed-time safety for an initially unsafe system is guaranteed. A series of auxiliary gains $c_i^*$ for $\forall i = 1,\cdots,n$ are defined as
	\begin{gather} \label{eq:c*}
		c_i^* = \begin{cases}
			\Bar{c}, & \text{if $h_i(t_0) > 0$},   \\
			\underline{c}, & \text{otherwise} \\
		\end{cases},
	\end{gather}
	where $\bar{c}=\max\{c_1,\cdots,c_n\}$, and $\underline{c}=\min\{c_1,\cdots,c_n\}$. Due to the choice of $c_i$ in \cref{t1}, the auxiliary gains $c_i^*$ are non-negative. Next, we rewrite \eqref{eq:h_twoexp} for $i=n-1$ as
	\begin{align} \label{eq:h_i_auxiliary}
		h_{i}(t) \geq& h_{i+1}(t_0) \int\nolimits_{t_0}^te^{-c_i^*\int\nolimits_{\tau}^t\varphi (s)ds}d\tau + h_{i}(t_0)e^{-c_{i}^*\int\nolimits_{t_0}^t \varphi (s)ds}, 
	\end{align}
	which inherits the probability of at least $1-m\delta$. By applying the induction step with $i$ to \eqref{eq:hi_int} from $n-1$ to $1$ recursively, the inequality of $h_i(t)$ for $\forall i=1,\cdots,n-1$ is expressed as
	\begin{gather}
		h_{i}(t) \geq \sum\nolimits_{j=i}^n h_j(t_0) \dfrac{(t-t_0)^{j-i}}{(j - i)!} e^{c_{j}^* \int\nolimits_{t_0}^t \varphi (\tau)d\tau}.
	\end{gather}
	As a result, the value of $h_1(t)$ is bounded as
	\begin{align} \label{eqn_h1_t}
		h_1(t) \!\ge\! \sum\nolimits_{j\!=\!1}^n h_j(t_0) \frac{ (t \!-\! t_0)^{j \!-\! 1\!}}{(j \!-\! 1)} e^{-c_j^* \int\limits_{t_0}^{T^*}  \varphi (\tau)d\tau} e^{-c_j^* \int\limits_{T^*}^t  \varphi (\tau)d\tau},
	\end{align}
	where $T^* \!=\! t_0 \!+\! T_{\text{pre}}$. Note that with the definition of $\varphi (\cdot)$ in \eqref{eq:blowupfunction} and the positivity of $c_j^*$, it has $e^{-c_j^*\int\nolimits_{t_0}^{T^*}\varphi (\tau)d\tau} \!=\! 0$,
	such that $h_1(t) \!\ge\! 0, \forall t \!\ge\! t_0 \!+\! T_{\text{pre}}$ holds with a probability of at least $1 \!-\! m \delta$, which concludes the proof for an initial unsafe condition.
	We now prove that the prescribed-time safety for
	an initially safe system is also guaranteed. With $\varphi (t_0) \!=\! 1$ at the initial time, the equality in \eqref{eq:hi+1} for $i \!=\! 2,\!\cdots\! n$ is carried out as
    \looseness=-1
	\begin{gather}
		h_{i}(t_0)=\dot{h}_{i-1}(t_0)+c_{i-1}h_{i-1}(t_0).
	\end{gather}
	Through the design of initial gains in Theorem \ref{t1}, the initial value $h_{i}(t_0)$ follows that $h_{i}(t_0)>0, \quad i=2,\cdots n$.
	Additionally, leveraging the positive nature of the exponential integral, the inequality \eqref{eq:h_twoexp} is further written as
	\begin{gather}
		h_{i}(t) \geq h_{i}(t_0)e^{-c_{i}\int\nolimits_{t_0}^t\varphi (s)ds}>0, \ i=n-1, \label{eq:hi-1>exp}
	\end{gather}
	which inherits the probability of at least $1-m\delta$. By substituting \eqref{eq:hi-1>exp} to \eqref{eq:hi_int} for each steps backwards from $i=n-1$ to $1$, the subsequent formula can be expressed as
	\begin{gather}
		h_{1}(t) \geq h_{1}(t_0)e^{-c_1\int\nolimits_{t_0}^t\varphi (s)ds} \label{eq:h1_greater_exp}
	\end{gather}
	Consider a system that initially remains within the safe area, i.e., $h_1(t_0)\geq 0$, and proceeding from (\ref{eq:h1_greater_exp}), it is shown that
	\begin{gather}
		h_{1}(t) \geq h_{1}(t_0)e^{-c_1\int\nolimits_{t_0}^t\varphi (s)ds} \geq 0, \forall t \in [t_0,t_0+T_{\text{pre}}) \label{eq:h1_safetosafe}
	\end{gather}
	with at least $1 \!-\! m\delta$ probability. Combining the proof for initially safe and unsafe cases, it is proven that the prescribed-time safety is guaranteed with a high probability. \looseness=-1
\end{proof}

\cref{t1} shows the prescribed time safety in \cref{d1} and \ref{d2} is achieved with high probability by using the proposed GP-based safety controller in \eqref{eq:main_theory}, relaxing the requirement of the known accurate model as in \cite{abel_prescribed-time_2023}.
Despite probabilistic safety, the proposed controller only requires Lipschitz continuity of $\bm{d}(\cdot)$, which is common in nonlinear control \cite{hassan2002nonlinear} and less restrictive than other methods based on e.g., neural networks \cite{yaghoubi_training_2020}. To guarantee the safety with higher probability, a more conservative approximation of the prediction error is non-negligible according to \cref{l1}, inducing larger $\eta_j(\cdot), j = 1,\cdots,m$ and causing potential infeasibility of the QP problem in \eqref{eq:main_theory}. To reduce the prediction errors and improve the feasibility, the incorporation of distributed GP \cite{lederer2021gaussian} and online learning \cite{capone2023safe} is an efficient and promising way, which can be directly integrated into our framework.

\begin{remark}
    The barrier functions $h_i$, $i \!=\! 1,\!\cdots\!, n\!+\!1$ in \eqref{eq:h1}-\eqref{eq:hi_dot}, can be extended to a multi-dimensional function with $d_h \!>\! 1$, which represents multiple safety constraints. With respect to the extension of barrier functions, system safety is proven with a similar process from \eqref{eq:psi} to \eqref{eq:h1_safetosafe}.\looseness=-1
\end{remark}

\begin{remark}
    In this paper, the proposed method only guarantees the PTSf for the system \eqref{eq:1} with state-dependent unknown dynamics, i.e., the uncertainty affected by control input is not included in the unknown dynamics. To broaden the applicability to a larger range of unknown systems e.g., consider an unknown $\bm{g}(\bm{x})$$\bm{u}$, the compound kernel trick \cite{valid_CBF_1} can be integrated to learn the unknown dynamics of $\bm{g}(\bm{x})\bm{u}$. However, how to sufficiently and safely excite the system for accurate predictions by choosing $\bm{u}$ in the training dataset is still an open question, which is considered for future research.
\end{remark}

\begin{remark}
    In this paper, we propose a control method to pursue the PTSf, which is guaranteed if the QP form \eqref{eq:main_theory} is feasible for the system \eqref{eq:1}. In future extensions, this feasibility assumption could be relaxed by several techniques, for example, a back-up control law \cite{lederer_safe_2023} or the online learning strategy \cite{capone2023safe}, \cite{Feasible_assump_3} can be designed to maintain feasibility.
\end{remark}

\section{Numerical Evaluation} \label{section_simulation}
In this section, we consider a two-link robotic manipulator \cite{murray2017mathematical} with unit masses and unit length for each link. Based on the robot dynamics, the state space model as \eqref{eq:1} is written as a second-order dynamics with $n = m = 2$ and
\begin{align}
	\bm{f}(\bm{x})= \bm{M}^{-1}(\bm{x}) ( -\bm{C}(\bm{x}) - \bm{G}(\bm{x}) ), && \bm{g}({\bm{x}})= \bm{M}^{-1}(\bm{x}),\label{eq:robot_dynamic} \nonumber
\end{align}
where $\bm{M}(\bm{x})$, $\bm{C}(\bm{x})$, $\bm{G}(\bm{x})$ are nominal inertia matrix, Coriolis and centrifugal term, and gravitational term from \cite{murray2017mathematical}, respectively. 
The system states $\bm{x} \!=\! [\bm{x}_1^\top, \bm{x}_2^\top]^\top$ represents joint positions and joint velocities, which are expressed as $\bm{x}_1 \!=\! [q_1, q_2]^\top \!\in\! [-2\pi,\! 2\pi]^2$ and $\bm{x}_2 \!=\! [\dot{q}_1,\! \dot{q}_2]^\top \!\in\! [-10, 10]^2$. We consider unknown dynamics in \eqref{eq:1} is $\bm{d}(\bm{x}) \!=\! [d_1(\bm{x}),\! d_2(\bm{x})]^{\top} \!=\! [5 \sin(q_1) \!+\! 3 \cos(q_2), 3 \cos(q_1) \!+\! 5\sin(q_2) \!+\! 30 ]^{\top}$. 
To identify the system uncertainty, GP regression is used with the squared exponential kernel, i.e., 
$\kappa(\bm{x},\bm{x}') \!=\! \sigma_f^2 \exp(- 0.5 l^{-2} \| \bm{x} \!-\! \bm{x}' \|^2),$
where $\sigma_f \!=\! 1$ and $l \!=\! 0.4$. The parameters of the error bound are chosen by $\delta \!=\! 0.01$ and $\tau \!=\! 10^{-10}$. For training the models, a data set $\mathbb{D}$ with $900$ data pairs are collected equally distributed on the domain $q_1, q_2 \!\in\! [-2\pi, 2\pi]$.\looseness=-1

\begin{figure}[t]
	\centering
	\includegraphics[width=0.39\textwidth]{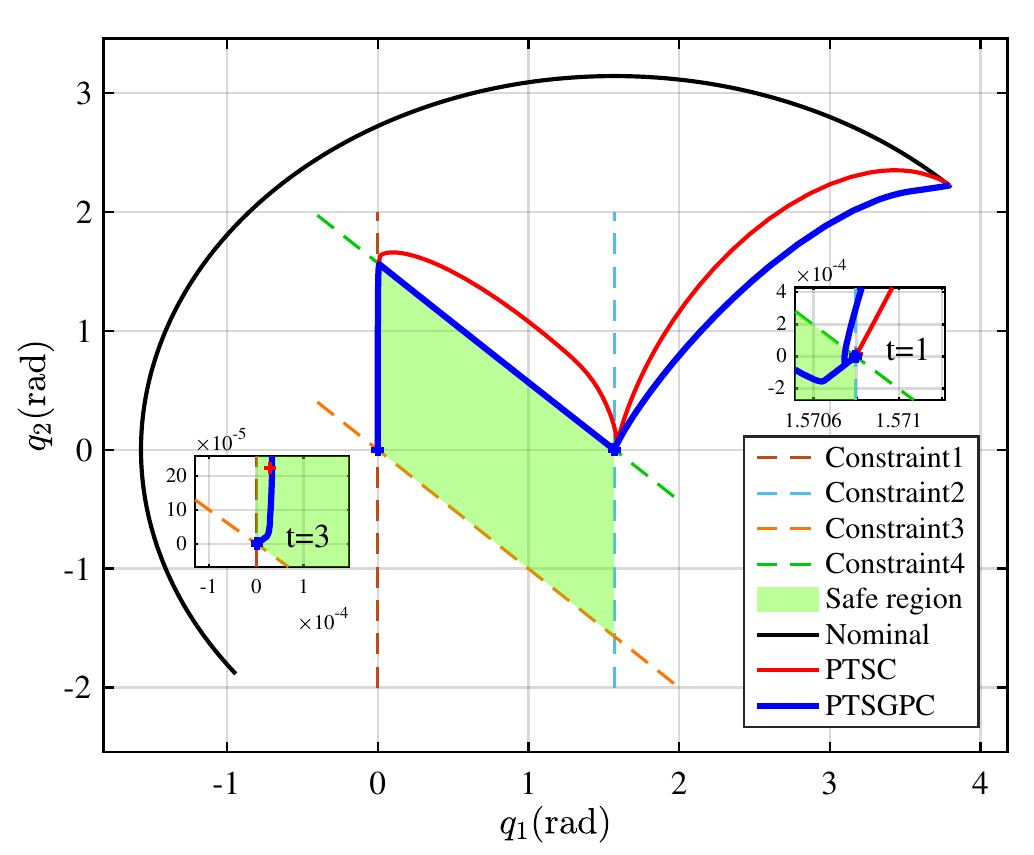}
    \vspace{-0.3cm}
	\caption{Trajectory for robot manipulator under PTSC and PTSCGP.}
    \vspace{-0.2cm}
	\label{fig:state}
\end{figure}
The safe region is defined such that the entire robot manipulator belongs to the first quadrant of Cartesian space in task space, which is equivalent to the green zone in \cref{fig:state} in joint space.
Moreover, the safe region is also expressed through the functions $h^{(i)}_1: \mathbb{R}^2 \to \mathbb{R}$ for $\forall i=1,\cdots,4$ as
\begin{align}
	&h_1^{(1)}(\bm{x}_1)=q_1, & & h_1^{(3)}(\bm{x}_1)=q_1+q_2, \nonumber \\
	&h_1^{(2)}(\bm{x}_1)=-q_1+ \pi/2, & & h_1^{(4)}(\bm{x}_1)=-q_1-q_2+\pi/2. \nonumber
\end{align}
Each function $h^{(i)}_1(\cdot)$ introduces a constraint in \eqref{eq:main_theory}, inducing a QP problem with $4$ constraints.

\begin{figure}[t]
	\centering
	\includegraphics[width=0.39\textwidth]{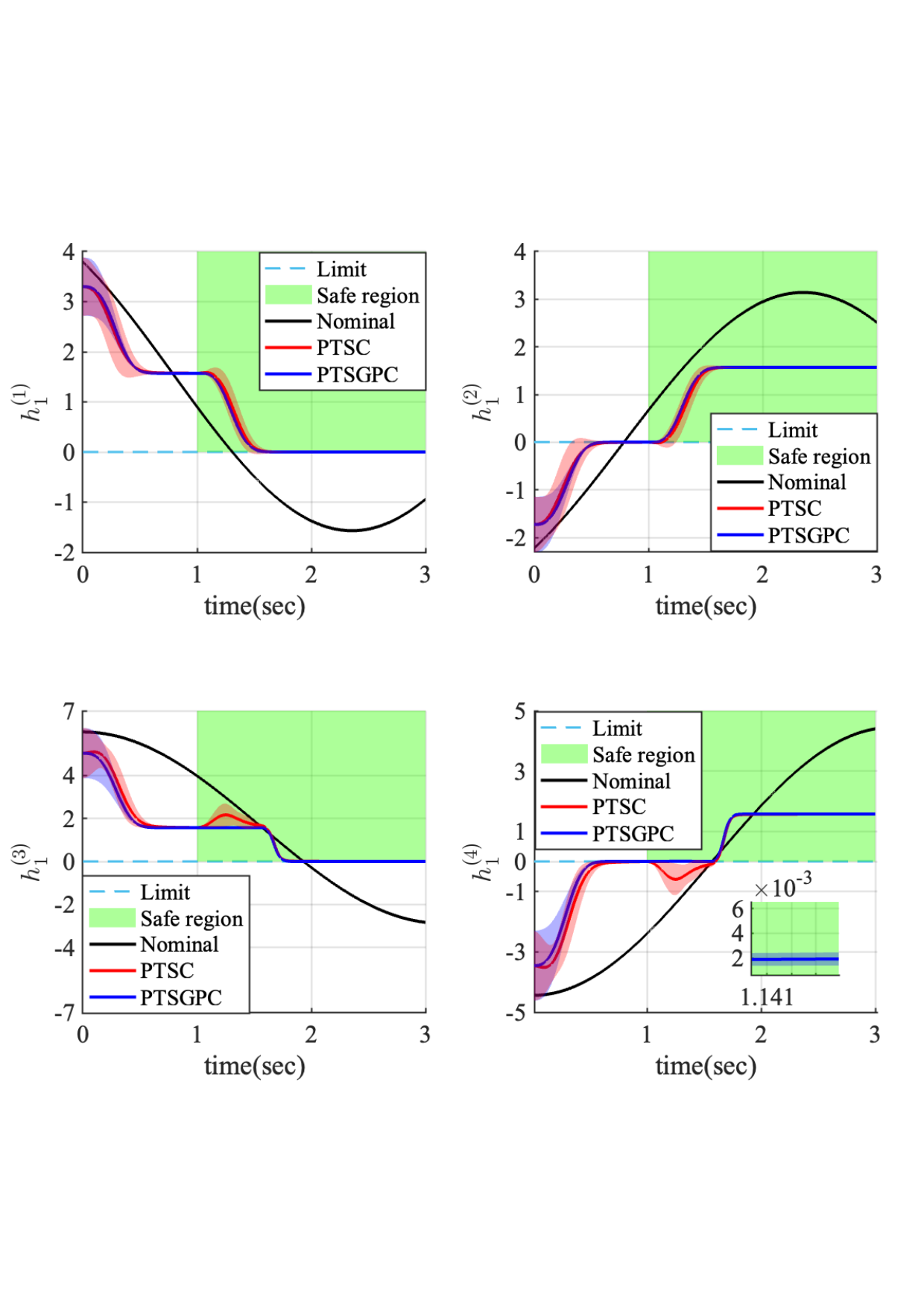}
    \includegraphics[width=0.39\textwidth]{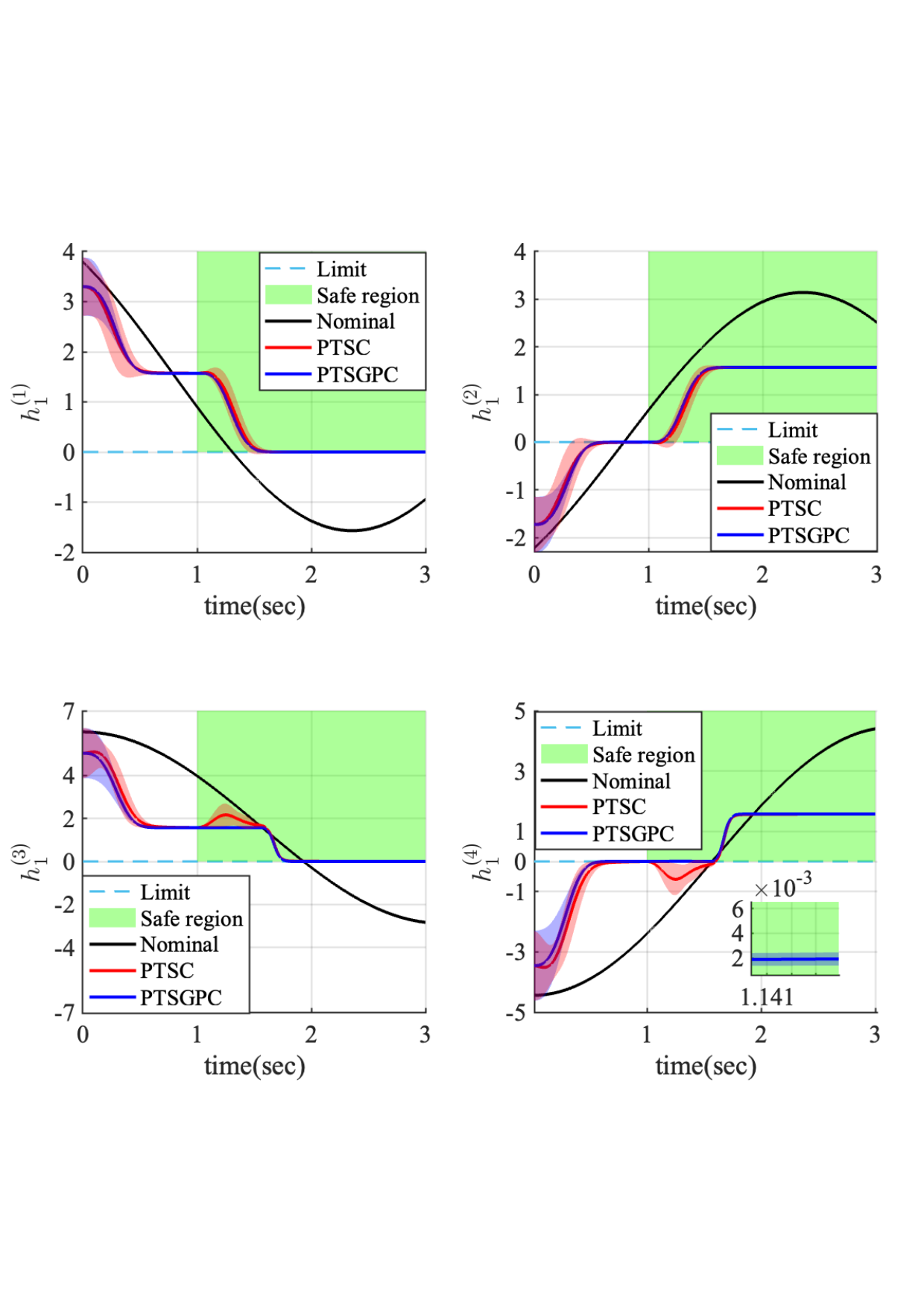}
    \vspace{-0.2cm}
	\caption{Results of $h_1^{(i)}$ with $i \!=\! 1, \!\cdots\!, 4$ for PTSC and PTSCGP in $t \!=\! [0,\!3]$.}
    \vspace{-0.6cm}
	\label{fig:h1}
\end{figure}

The nominal control task is to track a desired reference $\bm{q}_d(t) = [\pi \cos(t- \pi /4 ) + 0.5 \pi, \pi \sin ( t - \pi / 4 ) ]^{\top}$, such that the nominal control law $\bm{u}_{nom}$ is designed in PD form as $\bm{u}_{nom}=\bm{K}_p( \bm{q}_d -\bm{x}_1) + \bm{K}_d (\dot{\bm{q}}_d - \bm{x}_2)$ with $\bm{K}_p = \mathrm{diag}(50,50)$ and $\bm{K}_d=\mathrm{diag}(25,25)$.
Set the simulation time as $t \in [0, 3]$, and notably the resultant path lies outside the safe region for $t \in [0, 3]$ as shown in \cref{fig:state}. The initial state $\bm{x}(0)$ at $t_0 = 0$ is set as $\bm{x}(0) = [\bm{q}_d(0)^\top, \bm{0}_{1 \times 2}]^\top$ satisfying $\bm{x}(0) \notin \mathcal{C}$.
The safety filter is designed as in \eqref{eq:main_theory} with $\alpha=400$ and two time periods $t=[t_0^{(1)}, t_0^{(1)}+T_{\text{pre}}^{(1)}]$ as well as $t=[t_0^{(2)}, t_0^{(2)}+T_{\text{pre}}^{(2)}]$ with $t_0^{(1)}=0, T_{\text{pre}}^{(1)} = t_0^{(2)}=1, T_{\text{pre}}^{(2)}=3$. In the first time period, starting from an unsafe initial condition, the safety objective is returning to the safe region. Then, the system states maintain within the safe region during the second time period, i.e., $h_1^{(i)} \in \mathbb{R}^+$, with $i=1,\cdots,4$ for $\forall t \in [1, 3]$. In order to illustrate the validity of our proposed approach, the simulation is repeated 100 times to account for the randomness in unknown dynamics and initial states of the system, which are randomized uniformly in the range of $ \left[ \bm{d}(\bm{x})-15, \bm{d}(\bm{x})+15 \right]$ and $\left[ -1+\bm{q}_d(0), \bm{q}_d(0) \right]$, respectively.
\looseness=-1

To demonstrate the superiority of the proposed prescribed-time safe Gaussian process control (PTSGPC), the prescribed-time safe control (PTSC) proposed in \cite{abel_prescribed-time_2023} is used for comparison. The desired reference trajectory and the state trajectory of PTSC and PTSGPC are shown in \cref{fig:state}. 
The proposed PTSGPC properly addresses uncertainty, ensuring that the robot manipulator achieves the safety objective, which is also close to the nominal trajectory throughout the entire process. In contrast, although the PTSC enables the system to return to the safe region in a specified time, it fails to maintain the safety condition in the subsequent period due to the impact of uncertainty. Notably, the result of PTSC in \cref{fig:h1} shows the negative value of $h_1^{(4)}$, which violates the $4^{th}$ safety constraint, with high probability during the time period $t \in [1,3]$. Conversely, in the case of PTSGPC, all values of $h_{1}^{(i)}$ turn positive with a 95\% probability setting of $1-m\delta$ after $t=1$ attributed to the learning of uncertainty, which validates the \cref{t1} even if the system dynamics is partially known. The performance of unknown dynamics quantification is demonstrated in \cref{fig:GP_result}, which illustrates that the prediction error from GP regression is under the probabilistic error bound with 95\% probability.
The control input $\bm{u} = [\text{u1}, \text{u2}]^\top$ from the proposed control law in \cref{t1} for the robot manipulator is shown in \cref{fig:controlInput}, where $\text{u1}$ and $\text{u2}$ are the control inputs in the first and second joints, respectively.
\looseness=-1

\begin{figure}[t]
	\centering
	\includegraphics[width=0.48\textwidth]{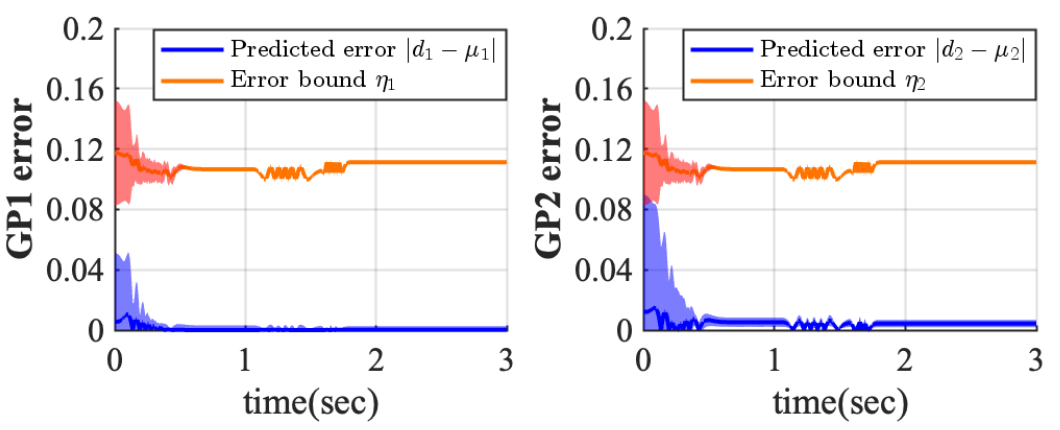}
    \vspace{-0.4cm}
	\caption{The GP prediction error of uncertainty and its error bound.}
    \vspace{-0.3cm}
	\label{fig:GP_result}
\end{figure}

\begin{figure}[t]
	\centering
	\includegraphics[width=0.35\textwidth]{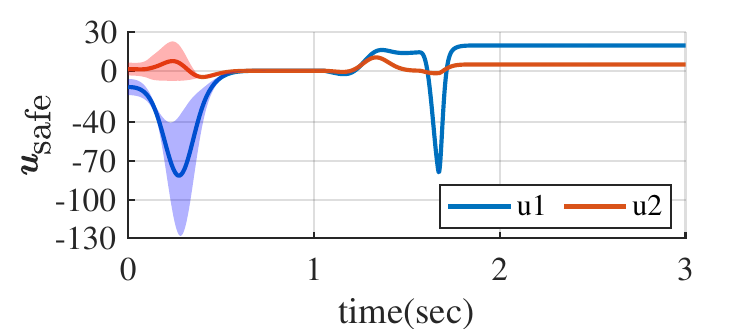}
    \vspace{-0.3cm}
	\caption{The control signals of PTSCGP.}
    \vspace{-0.6cm}
	\label{fig:controlInput}
\end{figure}

\section{Conclusions} \label{section_conclusion}

In this paper, we propose a safe learning control for control affine systems, that ensures the safety condition in a given prescribed time, independent of the initial state. By integrating the time-varying design and Gaussian process regression in the barrier function, the guarantee for system safety with a high probability is shown. The result shows that the system achieves the safety objective using our designed controller.\looseness=-1

\addtolength{\textheight}{-12cm} 

\bibliographystyle{IEEEtran}
\bibliography{ref/application,ref/CBF,ref/finite_time,ref/fixed_time,ref/Prescirbed_time,ref/NNCBF,ref/GPCBF,ref/MethodRef,ref/Supplement}

\end{document}